\definecolor{OAIblue}{HTML}{0B57D0}
\definecolor{OAIgray}{HTML}{4D4D4D}
  \def\mu{mu}%
  \def\nu{nu}%
  \def\to{->}%
  \def\ge{>=}%
  \def\le{<=}%
  \def\leq{<=}%
  \def\alpha{alpha}%
  \def\beta{beta}%
  \def\gamma{gamma}%
  \def\delta{delta}%
  \def\preceq{<=}%
  \def\mathrm#1{#1}%
  \def\mathit#1{#1}%
  \def\tfrac#1#2{#1/#2}%
  \def\varphi{phi}%
  \def\Phi{Phi}%
  \def\leavevmode@ifvmode{}%
  \def\kern#1{}%
\titleformat{\section}{\large\bfseries\color{OAIgray}}{\thesection}{0.5em}{}
\titleformat{\subsection}{\normalsize\bfseries\color{OAIgray}}{\thesubsection}{0.5em}{}
\theoremstyle{plain}
\newtheorem{theorem}{Theorem}
\newtheorem{proposition}{Proposition}
\theoremstyle{definition}
\newtheorem{example}{Example}
\newtheorem{assumption}{Assumption}
\theoremstyle{remark}
\lstdefinelanguage{EBNF}{morekeywords={grammar,rule,token},sensitive=true}
\newcommand{\keywords}[1]{\vspace{0.25em}\textbf{Keywords:} #1}
\title{XML Prompting as Grammar-Constrained Interaction: Fixed-Point Semantics, Convergence Guarantees, and Human-AI Protocols}
\author[1]{Faruk Alpay}
\author[2]{Taylan Alpay}
\affil[1]{\normalsize Lightcap, Institut für die Zukunft\,\textendash\,Germany\\\texttt{alpay@lightcap.ai}}
\affil[2]{\normalsize Turkish Aeronautical Association, Aerospace Engineering\,\textendash\,Ankara, Turkey\\\texttt{s220112602@stu.thk.edu.tr}}
\date{September 9, 2025}
\begin{document}
\maketitle

\begin{onecolabstract}
\textbf{Abstract.} Structured prompting with XML tags has emerged as an effective way to steer large language models (LLMs) toward parseable, schema\,–\,adherent outputs in real\,–\,world systems. We develop a logic\,–\,first treatment of \emph{XML prompting} that unifies (i) grammar\,–\,constrained decoding, (ii) fixed\,–\,point semantics over lattices of hierarchical prompts, and (iii) convergent human\,–\,AI interaction loops. We formalize a complete lattice of XML trees under a refinement order and prove that monotone prompt\,–\,to\,–\,prompt operators admit least fixed points (Knaster\,–\,Tarski) that characterize steady\,–\,state protocols; under a task\,–\,aware contraction metric on trees, we further prove Banach\,–\,style convergence of iterative guidance. We instantiate these results with context\,–\,free grammars (CFGs) for XML schemas and show how constrained decoding guarantees well\,–\,formedness while preserving task performance. A set of multi\,–\,layer human\,–\,AI interaction recipes demonstrates practical deployment patterns, including multi\,–\,pass “plan\,$\to$\,verify\,$\to$\,revise” routines and agentic tool use. We provide mathematically complete proofs and tie our framework to recent advances in grammar\,–\,aligned decoding, chain\,–\,of\,–\,verification, and programmatic prompting. \\[-0.5em]

\keywords{XML prompting; grammar\,–\,constrained decoding; fixed\,–\,point theorems; Banach contraction; Knaster\,–\,Tarski; modal $\mu$\,–\,calculus; structured outputs; human\,–\,AI interaction; arXiv cs.AI; arXiv cs.CL}
\end{onecolabstract}

\vspace{0.5em}

\section{Introduction}
\textbf{Motivation.} Modern LLM applications increasingly require \emph{structured outputs} (e.g., XML/JSON) for safety, interoperability, or downstream execution. Free\,–\,form generation jeopardizes parseability and correctness; grammar\,–\,constrained decoding (GCD) mitigates this by masking tokens banned by a target grammar \cite{geng2023gcd,raspanti2025gcd,park2025flex,park2024gad,zhang2023tooldec,melcer2024fim,jmlr2022coresp,pan2020rectifier,balakrishnan2019cd,cold2022}. Complementary prompting strategies (chain\,–\,of\,–\,thought \cite{wei2022cot}, self\,–\,consistency \cite{wang2022selfconsistency}, program\,–\,of\,–\,thoughts \cite{chen2022pot}, ReAct \cite{yao2023react}, DSPy \cite{khattab2023dspy}) structure \emph{reasoning}; however, they do not alone guarantee \emph{well\,–\,formedness}. XML prompting bridges both by (a) delineating roles, contexts, and exemplars via tags, and (b) aligning generation with schemas that are enforceable at decode time.

\textbf{Contributions.} We:
\begin{enumerate}[leftmargin=1.25em]
    \item Formalize XML prompting as a typed tree language with a refinement order and define \emph{prompt transformers} $\mathcal{T}$ capturing one interaction round.
    \item Prove existence of least fixed points of monotone $\mathcal{T}$ over complete lattices of labeled trees (Knaster--Tarski), characterizing steady-state protocols \cite{tarski1955,caicedo2006knaster}.
    \item Introduce a task\,–\,aware tree metric and show $\mathcal{T}$ is a contraction under mild conditions, yielding Banach\,–\,style convergence with explicit rates \cite{patafixedpoint,ciesielski2017survey}.
    \item Connect fixed\,–\,point semantics with grammar\,–\,aligned decoding to guarantee XML well\,–\,formedness and schema conformance while preserving accuracy \cite{geng2023gcd,raspanti2025gcd,park2024gad,park2025flex}.
    \item Provide multi\,–\,layer human\,–\,AI interaction templates (plan\,$\to$\,solve, plan\,$\to$\,verify, tool\,–\,call) with end\,–\,to\,–\,end XML recipes and correctness guarantees.
\end{enumerate}

\textbf{Relevance to hallucinations.} Our semantics complements causal analyses of hallucination \cite{kalai2025hallucinate,nature2024farquhar,hdsr2025confidence}, and interfaces with verification protocols like Chain\,–\,of\,–\,Verification (CoVe) \cite{cove2024} by embedding verification queries into XML and constraining answers by grammar.

\section{Background and Related Work}
\subsection{Structured prompting and constrained decoding}
Grammar\,–\,constrained decoding (GCD) enforces output to follow a CFG/EBNF by masking invalid tokens, improving parseability and often task metrics \cite{geng2023gcd,raspanti2025gcd,park2025flex,park2024gad,schmidt2025clinical}. Constrained decoding for code and tool calls provides syntax\,–\,error guarantees \cite{zhang2023tooldec,melcer2024fim}. Structured chain–of–thought and tag–based prompts further organize reasoning steps \cite{tam2024format}.

\subsection{Programmatic prompting and languages}
LMQL \cite{lmql2022,lmqlacm2023} and Microsoft SAMMO \cite{sammomsr2024} treat prompts as programs with constraints and compile\,–\,time optimization; execution\,–\,guided prompting augments with runtime checks \cite{verbruggen2025execution}. ETH Zurich’s DOMINO demonstrates fast, minimally invasive constraining \cite{domino2024}. Google’s lines of work on CoT and self\,–\,consistency shape reasoning \cite{wei2022cot,wang2022selfconsistency}; Stanford’s DSPy compiles declarative pipelines \cite{khattab2023dspy}.

\subsection{Fixed points, \texorpdfstring{$\mu$}{mu}-calculus, and self-reference}
Our treatment borrows from lattice and fixed\,–\,point theory (Knaster\,–\,Tarski \cite{tarski1955}, Banach \cite{patafixedpoint}) and modal $\mu$\,–\,calculus foundations \cite{tamura2015hybridmu,alberucci2009s5mu,dagostino2012mu,afshari2024firstordermu}. We also connect to recent logic\,–\,first analyses of transparency and fixed points \cite{alpay2025fixed}.

\subsection{Hallucination detection and mitigation}
Statistical and verification\,–\,based mitigation \cite{nature2024farquhar,cove2024,hdsr2025confidence} can be embedded in XML workflows that interleave draft generation, question planning, and verification steps.

\section{A Formal Model of XML Prompting}
\subsection{XML trees and refinement}
Let $\Sigma$ be a finite alphabet of tags and attributes. XML documents are labeled, ordered trees $t\in \mathcal{T}$ with node labels from $\Sigma$ and attribute key\,–\,values from $\Sigma^*$. Define a partial order $\preceq$ (\emph{refinement}) by $t_1\preceq t_2$ iff $t_2$ is obtained by (i) adding children or attributes to $t_1$, (ii) filling placeholders, or (iii) replacing free\,–\,text with narrower regex\,/\,grammar\,–\,constrained nonterminals.
\begin{proposition}[Complete lattice]
$(\mathcal{T},\preceq)$ is a complete lattice: arbitrary meets are greatest lower bounds under common subtrees; joins are least upper bounds formed by synchronized union with conflict resolution via least general generalization.
\end{proposition}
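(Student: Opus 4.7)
The plan is to establish completeness via the standard criterion that $(\mathcal{T},\preceq)$ is a complete lattice if and only if every subset admits a greatest lower bound and $\mathcal{T}$ contains a top element; joins then arise from the dual formula $\bigsqcup S=\bigsqcap\{u\in\mathcal{T}:u\succeq s\text{ for all }s\in S\}$. Because meets on trees admit a transparent positionwise description, I would first prove meet-completeness and derive joins by duality.

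I would begin by fixing a concrete representation: identify each XML tree with a prefix-closed partial map from Gorn addresses (finite sequences of positive integers encoding sibling positions) into $\Sigma\cup\Sigma^\ast$, together with a finite attribute assignment at each node. Translating the three refinement moves defining $\preceq$ into this language gives (i) domain extension, (ii) specialization of a placeholder label to a concrete one, and (iii) narrowing of a grammar nonterminal to any element of its derivation preorder. Reflexivity and transitivity of $\preceq$ are then immediate; antisymmetry holds modulo isomorphism of labeled ordered trees, which I would silently quotient out. At this stage I would also impose a standing hypothesis that the label alphabet $\Sigma$, together with the placeholder-to-concrete and nonterminal-narrowing relations, forms a complete meet-semilattice---a mild condition that holds for the schema languages of interest once formal $\bot$ and $\top$ labels are adjoined.

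For a nonempty family $S\subseteq\mathcal{T}$, I would define $\bigsqcap S$ positionwise: its address domain is $\bigcap_{t\in S}\mathrm{dom}(t)$ (automatically prefix-closed), and its label at each common address $p$ is the label-meet (least general generalization) of the family $\{t(p):t\in S\}$, with attributes intersected analogously. That $\bigsqcap S$ is a lower bound follows because each of the three moves defining $\preceq$ is preserved componentwise when passing to a common sub-skeleton with generalized labels; that it is the \emph{greatest} lower bound is a direct universality argument, since any common lower bound $u$ must have its domain inside every $\mathrm{dom}(t)$ and its labels below every $t(p)$, hence below their meet. Adjoining a formal top $\top$ for the over-constrained tree yields $\bigsqcap\emptyset=\top$ and hence all meets; the dual formula then unfolds to exactly ``synchronized union of address sets with, at each shared position, the label-refinement of the $t(p)$ (collapsing to $\top$ on incompatibility)'', which is the content of the proposition.

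The main obstacle will be the treatment of \emph{ordered children}: two trees may disagree on the number or order of children at a node, so the positionwise intersection is only canonical once an alignment convention is pinned down. Gorn addresses impose positional left-to-right matching, which behaves well for append-style refinements but requires care when refinement is permitted to insert new children between existing ones; I would explicitly rule out such mid-list insertions in the definition of $\preceq$, decomposing them as domain extension followed by re-indexing and absorbing the re-indexing into the isomorphism quotient. A secondary subtlety is verifying that least general generalization is well defined on the union of regex and CFG nonterminal labels, for which I would appeal to standard anti-unification results for regular and context-free tree languages combined with the standing complete-lattice hypothesis on $\Sigma$.
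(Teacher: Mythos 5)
Your proposal is correct and follows essentially the same route as the paper's proof sketch: represent trees as partial functions on Dewey/Gorn addresses, obtain arbitrary meets by intersection on common support with label generalization, and recover joins (the ``synchronized union'') from meet-completeness. You supply several details the paper leaves implicit --- the adjoined top element for the empty meet, the standing hypothesis that the label poset is a complete meet-semilattice, and the alignment convention for ordered children --- all of which are genuinely needed for the argument to close and strengthen rather than alter the approach.
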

\begin{proof}
Sketch of construction: represent trees as partial functions from Dewey paths to labeled nodes; refinement is inclusion up to label generalization. Directed suprema exist by componentwise union; arbitrary infima exist by intersection on common support. Full details are standard in domain theory on trees; see also \cite{tarski1955,dagostino2012mu}.
\end{proof}

\subsection{Prompt transformers}
We model one round of interaction (human input, LLM draft, verifier/tool feedback) as an operator $\mathcal{T}:\mathcal{T}\to\mathcal{T}$ that (a) expands or annotates nodes (e.g., adds <plan>, <evidence>), (b) enforces CFG/XSD constraints, (c) inserts uncertainty scores.
\begin{assumption}[Monotonicity]
For all $t_1\preceq t_2$, we have $\mathcal{T}(t_1)\preceq\mathcal{T}(t_2)$.
\end{assumption}
\begin{theorem}[Least fixed\,–\,point semantics]
If $\mathcal{T}$ is monotone on the complete lattice $(\mathcal{T},\preceq)$, then $\mathrm{lfp}(\mathcal{T})$ exists and equals $\bigwedge\{t: \mathcal{T}(t)\preceq t\}$ \cite{tarski1955}.
\end{theorem}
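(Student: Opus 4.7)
The plan is to derive this theorem as a direct instance of Knaster--Tarski on the complete lattice $(\mathcal{T},\preceq)$ established in the preceding Proposition; the lattice-completeness work has already been done there, so what remains is the familiar two-inclusion argument characterizing the least fixed point as a meet of pre-fixed points. Write $P:=\{t\in\mathcal{T}:\mathcal{T}(t)\preceq t\}$ for the set of pre-fixed points and set $t^\star:=\bigwedge P$, which exists by completeness. The theorem then reduces to showing that (i) $t^\star\in P$, (ii) $\mathcal{T}(t^\star)=t^\star$, and (iii) every fixed point of $\mathcal{T}$ lies above $t^\star$.

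First I would verify that $t^\star$ is a pre-fixed point. For each $t\in P$ we have $t^\star\preceq t$ by construction of the meet; monotonicity of $\mathcal{T}$ together with transitivity gives $\mathcal{T}(t^\star)\preceq\mathcal{T}(t)\preceq t$, so $\mathcal{T}(t^\star)$ is a lower bound of $P$ and hence $\mathcal{T}(t^\star)\preceq t^\star$. Applying $\mathcal{T}$ once more, monotonicity yields $\mathcal{T}(\mathcal{T}(t^\star))\preceq\mathcal{T}(t^\star)$, so $\mathcal{T}(t^\star)\in P$ and therefore $t^\star\preceq\mathcal{T}(t^\star)$. Antisymmetry of $\preceq$ then closes the fixed-point identity. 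Minimality is immediate: any fixed point $t'$ of $\mathcal{T}$ satisfies $\mathcal{T}(t')=t'\preceq t'$, whence $t'\in P$ and so $t^\star\preceq t'$.

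The main obstacle is not the abstract fixed-point argument, which is textbook once the lattice structure is in hand, but verifying that the concrete tree operators used in our prompt transformers genuinely respect $\preceq$ in the sense of the Monotonicity Assumption, so that the Knaster--Tarski hypothesis is substantive rather than vacuous. In particular, I would need to check that CFG/XSD enforcement, placeholder instantiation, and attribute insertion interact correctly with the Dewey-path meet and the least-general-generalization join, so that refining an input prompt never strictly coarsens the output: this is the content of the Monotonicity Assumption but is also the step most likely to fail in practice when a transformer uses non-monotone verifier feedback, e.g., deletion of earlier annotations. Any such transformer must be pre-processed (monotonized) before the theorem applies.

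Finally, I would add a short remark that if $\mathcal{T}$ is additionally Scott-continuous, i.e., preserves directed joins of trees built by monotone annotation, then the abstract meet admits the constructive Kleene presentation $\mathrm{lfp}(\mathcal{T})=\bigvee_{n\ge 0}\mathcal{T}^{n}(\bot)$, where $\bot$ is the empty XML skeleton. This both gives a computational handle on $\mathrm{lfp}(\mathcal{T})$ as the limit of iterated prompting rounds and prepares the ground for the later contraction and rate bounds, which will require iterating $\mathcal{T}$ from a concrete base prompt rather than appealing only to the abstract meet.
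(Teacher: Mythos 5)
Your proof is correct and is essentially the paper's proof: the paper simply cites Knaster--Tarski, and you have written out the standard two-inclusion argument (meet of pre-fixed points is itself a pre-fixed point, then $\mathcal{T}(t^\star)\in P$ forces equality by antisymmetry) that constitutes that theorem. Your added remarks on verifying the Monotonicity Assumption for concrete transformers and on the Kleene/Scott-continuity presentation are sensible supplements rather than deviations, and the latter anticipates the paper's own later Kleene-chain theorem.
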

\begin{proof}
Immediate from Knaster\,–\,Tarski.
\end{proof}

\subsection{A contraction metric on XML trees}
Let $d(t,t')=\sum_{p\in P}\alpha_p\,\delta(\ell_t(p),\ell_{t'}(p))$ with paths $P$ up to max depth, weights $\alpha_p>0$ summing to 1, and token\,–\,level distance $\delta$ induced by task metrics (e.g., strict equality for nonterminals, normalized edit distance for text). Let $\mathcal{T}$ apply bounded\,–\,Lipschitz local rewrites (decoder masks, CFG pruning, verifier tags) so that for some $0\leq q<1$, $d(\mathcal{T}(t),\mathcal{T}(t'))\le q\,d(t,t')$.
\begin{theorem}[Banach convergence of XML prompting]
If $\mathcal{T}$ is a contraction on $(\mathcal{T},d)$, then for any start $t_0$, the iterates $t_{n+1}=\mathcal{T}(t_n)$ converge to the unique fixed point $t^*$ with rate $O(q^n)$ \cite{patafixedpoint,ciesielski2017survey}.
\end{theorem}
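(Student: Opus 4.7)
The plan is to reduce the claim to the classical Banach contraction mapping theorem on a complete metric space, so the real work is in verifying the hypotheses and then assembling the standard Picard argument together with the geometric-series rate estimate.

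First, I would check that $d$ is a bona fide metric on $\mathcal{T}$. Non-negativity and symmetry are inherited pointwise from $\delta$, and the triangle inequality follows by weighting the pointwise triangle inequalities by the $\alpha_p\ge 0$ and summing. Because the weights sum to $1$, $d$ is in fact bounded, which makes the convergence estimate clean.

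Second, and this is where the real technical work lives, I would establish completeness of $(\mathcal{T},d)$. Given a Cauchy sequence $(t_n)$, each path $p\in P$ yields a Cauchy sequence of labels $\ell_{t_n}(p)$ in $\delta$, because $\alpha_p>0$ lets us lower-bound the $p$-contribution to $d$. Under the modeling assumption that $\delta$ is complete on its label alphabet (strict equality on a discrete nonterminal set, and normalized edit distance on bounded-length text), each path has a $\delta$-limit $\ell^\ast(p)$, and assembling these over the finite path domain $P$ defines a candidate tree $t^\ast$. Summability of the $\alpha_p$ together with dominated convergence then forces $d(t_n,t^\ast)\to 0$, and the refinement structure from the complete-lattice proposition ensures $t^\ast\in\mathcal{T}$.

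Third, I would run the standard Banach argument. Contractivity gives $d(t_{n+1},t_n)\le q^{n}\,d(t_1,t_0)$, so $(t_n)$ is Cauchy by the geometric series and converges to some $t^\ast$; uniform continuity of $\mathcal{T}$ (a $q$-Lipschitz map is continuous) passes to the limit to yield $\mathcal{T}(t^\ast)=t^\ast$. Uniqueness follows because any two fixed points $t,t'$ satisfy $d(t,t')=d(\mathcal{T}(t),\mathcal{T}(t'))\le q\,d(t,t')$, forcing $d(t,t')=0$. The explicit rate
\[
d(t_n,t^\ast)\;\le\;\frac{q^{n}}{1-q}\,d(t_0,t_1)
\]
then drops out of the telescoping bound $d(t_n,t_{n+k})\le\sum_{i=0}^{k-1}q^{n+i}d(t_0,t_1)$ in the limit $k\to\infty$, giving the claimed $O(q^n)$ convergence.

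The main obstacle is the completeness step: XML trees are not a priori a nice Euclidean object, so the label metric $\delta$ and the path domain $P$ must be chosen so that path-wise Cauchy limits genuinely assemble into well-formed XML trees in $\mathcal{T}$ rather than escaping to a larger completion. If the tag/nonterminal alphabet is finite and free-text slots lie in a complete subspace, everything stays inside $\mathcal{T}$; otherwise one works in the metric completion $\overline{\mathcal{T}}$ and invokes the CFG/XSD masks built into $\mathcal{T}$ to argue that contraction preserves well-formedness, so that the fixed point lives in the original lattice and agrees, when $\mathcal{T}$ is additionally monotone, with the Knaster--Tarski $\mathrm{lfp}(\mathcal{T})$ of the previous theorem.
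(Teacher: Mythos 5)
Your proposal is correct and takes essentially the same route as the paper, whose proof consists of a single line invoking the standard Banach fixed-point argument. You in fact go further than the paper by explicitly verifying that $d$ is a metric and, crucially, that $(\mathcal{T},d)$ is complete—a hypothesis the paper leaves entirely implicit—so your completeness discussion fills a real gap in the stated setup rather than diverging from it.
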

\begin{proof}
Standard Banach fixed\,–\,point theorem argument.
\end{proof}

\section{Grammar-Aligned Decoding for XML}
\subsection{CFGs and XML schemas}
Given an XML Schema Definition (XSD) or EBNF grammar $G$, grammar\,–\,aligned decoding computes stepwise token masks $M(s)$ from parser states $s$ so that generation traces remain within $\mathcal{L}(G)$ \cite{geng2023gcd,park2024gad,park2025flex,raspanti2025gcd}. This yields zero syntax errors for valid grammars and improves semantic fidelity in IE and planning tasks \cite{schmidt2025clinical}.

\begin{lstlisting}[language=EBNF,caption={Minimal EBNF for a reasoning exchange in XML.}]
grammar ReasoningXML
  Document  = "<dialog>" Turn+ "</dialog>" ;
  Turn      = "<turn role=\"(user|assistant)\">"
              Plan Evidence? Answer? "</turn>" ;
  Plan      = "<plan>" Step+ "</plan>" ;
  Step      = "<step index=\"[0-9]+\">" TEXT "</step>" ;
  Evidence  = "<evidence ref=\"[A-Za-z0-9_-]+\" conf=\"(0\.[0-9]{2}|1\.00)\"/>" ;
  Answer    = "<answer format=\"xml\|json\">" TEXT "</answer>" ;
  TEXT      = {any UTF-8 chars except '<' and '>'} ;
end
\end{lstlisting}

\subsection{Guarantees}
\begin{proposition}[Well\,–\,formedness]
If decoding is aligned to $G$ and the start symbol is \texttt{Document}, then all outputs are well\,–\,formed and valid w.r.t. $G$.
\end{proposition}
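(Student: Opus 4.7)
The strategy is a standard viable-prefix invariant, propagated by induction over the decoding trace, combined with a soundness argument for the parser-derived token masks $M(s)$ introduced in Section~4.1. Concretely, I plan to show that the predicate ``the currently decoded string $s$ is a viable prefix of some sentence derivable from $S=\texttt{Document}$'' is an inductive invariant of the constrained decoder, and that once decoding terminates the resulting string lies in $\mathcal{L}(G)$, which by definition means it is well-formed XML and valid with respect to $G$.

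\textbf{Setup and induction.} First I would fix $G$, the start symbol $S=\texttt{Document}$, and let $\mathrm{Pref}_G(S)=\{u:\exists w\in\mathcal{L}(G),\ u\text{ is a prefix of }w\}$. By the construction of grammar-aligned decoding, the mask at parser state $s$ is exactly $M(s)=\{\tau:s\cdot\tau\in\mathrm{Pref}_G(S)\}$, i.e., the set of tokens whose appending preserves viability; this is the soundness property established for parser-product masks in the works cited in Section~4.1. The base case is immediate: $s_0=\varepsilon\in\mathrm{Pref}_G(S)$ because $S$ is productive. For the inductive step, if $s_n\in\mathrm{Pref}_G(S)$ and the decoder emits $\tau_{n+1}$, then necessarily $\tau_{n+1}\in M(s_n)$, hence $s_{n+1}=s_n\cdot\tau_{n+1}\in\mathrm{Pref}_G(S)$. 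At termination the parser is in an accepting configuration and the remaining admissible mask contains only the end marker, so the final string $s_T$ belongs not merely to $\mathrm{Pref}_G(S)$ but to $\mathcal{L}(G)$ itself, which gives well-formedness and schema validity simultaneously.

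\textbf{Main obstacle.} The step I expect to be technically delicate is the \emph{token-versus-terminal} alignment between the LLM vocabulary and the terminal alphabet of $G$: a single model token can straddle grammar terminals, so a naive mask can be either unsound (admitting non-viable prefixes) or incomplete (forbidding viable ones), which would invalidate the induction. I would address this by adopting the detokenize-and-lift construction used in recent grammar-aligned decoders (re-tokenizing candidate continuations over characters and pushing the parser state through a vocabulary trie), which is precisely what makes the equality $M(s)=\{\tau:s\cdot\tau\in\mathrm{Pref}_G(S)\}$ hold exactly rather than approximately. A secondary subtlety is progress: the invariant alone does not force termination. I would dispatch this either by assuming $G$ is unambiguous and finite-depth over bounded productions (which the EBNF of the preceding subsection satisfies), or by invoking the Banach convergence of Section~3.3 to guarantee that the iterated decoder reaches the unique fixed-point document after finitely many effective steps, at which point the accepting-configuration argument concludes the proof.
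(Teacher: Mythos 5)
Your proof is correct and is essentially the paper's argument made explicit: the paper's entire proof is the single sentence that ``masked decoding prevents transitions to non\,--\,viable parser states,'' which is precisely the viable\,--\,prefix invariant you propagate by induction, so your write\,--\,up is a faithful (and considerably more rigorous) expansion of the same idea, including the token\,--\,versus\,--\,terminal alignment subtlety the paper leaves to its citations. One caution: invoking the Banach convergence of Section~3.3 to obtain termination of token\,--\,level decoding is a category error, since that contraction lives on the tree metric over \emph{interaction rounds} rather than over the autoregressive decoding trace, but your alternative route via a bounded\,--\,depth grammar suffices, and the paper's own proof does not address termination at all.
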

\begin{proof}
By construction, masked decoding prevents transitions to non\,–\,viable parser states \cite{geng2023gcd,park2024gad}.
\end{proof}

\section{Human-AI Interaction as Fixed-Point Iteration}
We model multi\,–\,layer interactions as sequences $(t^{(k)})_{k\ge0}$ in which users add constraints (tags), the model proposes plans/answers, and verifiers/tools supply evidence. Under monotonicity, the sequence ascends to a post\,–\,fixed point; under contraction assumptions, it converges uniquely.

\subsection{Three layered protocols}
\textbf{Layer A (Plan)}: the assistant must first produce a <plan> list of steps.\\
\textbf{Layer B (Verify)}: each step must carry <evidence> links with confidence.\\
\textbf{Layer C (Answer)}: the final <answer> is emitted only after all steps are verified.

\begin{example}[XML prompt template]
\begin{lstlisting}[language=XML]
<prompt>
  <task>Extract adverse events from the abstract.</task>
  <schema type="XSD">...</schema>
  <dialog>
    <turn role="user">
      <plan>Please outline extraction in 3 steps.</plan>
    </turn>
  </dialog>
</prompt>
\end{lstlisting}
\end{example}

\subsection{Fixed points and safety invariants}
Let $\varphi$ be a $\mu$\,–\,calculus property over tree\,–\,labeled Kripke structures (e.g., “every \texttt{<answer>} is supported by >=2 evidences with conf $\ge0.8$”). We encode $\varphi$ as a greatest fixed point $\nu X.\,\Phi(X)$ and enforce $\varphi$ as an invariant by pruning non\,–\,satisfying branches during decoding \cite{tamura2015hybridmu,alberucci2009s5mu,dagostino2012mu,afshari2024firstordermu}.

\section{Rigorous Proofs}
\subsection{Monotone transformer over a complete lattice}
\begin{theorem}[Existence of steady\,–\,state protocol]
Suppose $\mathcal{T}$: (i) preserves tag well\,–\,formedness, (ii) only refines or annotates nodes, (iii) composes with a total verifier $V$ that either adds <evidence> or a counterexample. Then $\mathcal{T}$ is monotone on $(\mathcal{T},\preceq)$ and has a least fixed point $t^*$. Moreover, the ascending Kleene chain $\bot, \mathcal{T}(\bot), \mathcal{T}^2(\bot),\dots$ converges to $t^*$ in $\omega$ steps.
\end{theorem}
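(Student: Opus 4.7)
The plan is to split the statement into three claims and handle them in order: (a) monotonicity of $\mathcal{T}$ with respect to $\preceq$, (b) existence of a least fixed point $t^*$ via Knaster--Tarski, and (c) convergence of the Kleene chain in $\omega$ steps. Parts (a) and (b) will be routine given the earlier results in the paper; part (c) is where real work is needed.

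For monotonicity, I would fix $t_1 \preceq t_2$ and decompose $\mathcal{T} = V \circ \mathcal{R}$, where $\mathcal{R}$ implements the refine/annotate rewrites of hypothesis (ii) and $V$ is the total verifier of (iii). Using the Dewey-path partial-function representation from the proof of the Complete-Lattice proposition, $t_1 \preceq t_2$ is witnessed by a label-compatible inclusion of supports. Hypothesis (i) forbids deletion of ancestors, and (ii) restricts $\mathcal{R}$ to extending the support or sharpening an existing label; both operations commute with the inclusion, so $\mathcal{R}(t_1) \preceq \mathcal{R}(t_2)$. Since $V$ only appends an \texttt{<evidence>} or counterexample child using information already present in its input, the same argument lifts to $V \circ \mathcal{R}$. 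With monotonicity and the complete lattice of Proposition~1 in hand, Knaster--Tarski yields $t^* = \mathrm{lfp}(\mathcal{T}) = \bigwedge\{t : \mathcal{T}(t) \preceq t\}$ immediately.

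For Kleene convergence in $\omega$ steps, monotonicity alone is insufficient, since Knaster--Tarski only guarantees transfinite convergence along the ordinals. I need to upgrade to Scott ($\omega$-) continuity: $\mathcal{T}(\sup_n s_n) = \sup_n \mathcal{T}(s_n)$ for every ascending chain $s_0 \preceq s_1 \preceq \cdots$. The observation that makes this work is that each round of $\mathcal{T}$ is \emph{finitary}: refinement fills a bounded number of placeholders, CFG/XSD pruning is local to a derivation slice, and the verifier emits finitely many evidence or counterexample tags per call. Any rewrite of finite support commutes with directed suprema, because any path inserted or sharpened by $\mathcal{T}(\sup_n s_n)$ must already have been present in some $\mathcal{T}(s_n)$ at a finite stage. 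Starting from $\bot$ and iterating then yields an ascending $\omega$-chain whose supremum is a fixed point of $\mathcal{T}$ and, by least-ness, coincides with $t^*$.

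The main obstacle is pinning down the finitary hypothesis cleanly, since the theorem as stated leaves the rewrite budget of $\mathcal{R}$ and $V$ implicit; without it, a merely monotone $\mathcal{T}$ could in principle require $\omega_1$-many iterations to stabilize. I would therefore isolate an auxiliary finite-support lemma stating that a single application of $\mathcal{T}$ depends on, and produces, only finitely many labeled paths of its input, and verify it for the CFG-masked decoder of Section~4 and the three layered protocols of Section~5 (bounded branching of refine steps and finite verifier certificates are built into those grammars). Once that lemma is available, the $\omega$-convergence claim reduces to the classical Kleene fixed-point theorem for Scott-continuous operators on an $\omega$-cpo, and the theorem is complete.
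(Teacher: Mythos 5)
Your proposal is correct, and for parts (a) and (b) it follows essentially the same path as the paper: monotonicity from the observation that refinements applicable to $t_1$ remain applicable to any $t_2 \succeq t_1$ (you merely make this concrete via the Dewey-path representation and the $V \circ \mathcal{R}$ decomposition), followed by Knaster--Tarski on the complete lattice of Proposition~1. Where you genuinely diverge is part (c). The paper justifies $\omega$-step convergence of the Kleene chain by saying that $\mathcal{T}$ is inflationary and monotone on a cpo; as you correctly point out, that is not enough --- an inflationary monotone map need only stabilize at some ordinal stage, and the supremum of the $\omega$-chain need not be a fixed point (the successor map on the lattice $\omega_1 + 1$ is the standard counterexample). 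Your route instead establishes Scott ($\omega$-)continuity via a finite-support lemma --- each round of $\mathcal{T}$ reads and writes only finitely many labeled paths, hence commutes with directed suprema --- and then invokes the classical Kleene fixed-point theorem. This is the argument the theorem actually needs, and your explicit acknowledgment that the finitary budget of $\mathcal{R}$ and $V$ must be added as a hypothesis (it is only implicit in the bounded-branching grammars of Sections~4--5) is an honest identification of a gap in the statement itself. In short: your proof is not just a different route for the convergence claim, it is a repair of the paper's; the only cost is the extra finitary assumption, which should be surfaced in the theorem's hypotheses.
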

\begin{proof}
(i)\,–\,(ii) give $t\preceq \mathcal{T}(t)$. If $t_1\preceq t_2$, all refinements applied to $t_1$ remain applicable to $t_2$, yielding $\mathcal{T}(t_1)\preceq\mathcal{T}(t_2)$. By Knaster\,–\,Tarski, $\mathrm{lfp}(\mathcal{T})$ exists. The Kleene chain converges because $\mathcal{T}$ is \emph{inflationary} and $\mathcal{T}$ is monotone on a cpo \cite{tarski1955}.
\end{proof}

\subsection{Contraction under task\,–\,aware metrics}
\begin{theorem}[Convergence rate]
Assume the decoder applies (a) CFG masks, (b) bounded edits per level, (c) evidence pruning bounded by $\beta<1$ of remaining ambiguous leaves. Then there exists $q<1$ such that $d(\mathcal{T}(t),\mathcal{T}(t'))\le q\,d(t,t')$ for all $t,t'$, and thus $t_n\to t^*$ with $d(t_n,t^*)\le \tfrac{q^n}{1-q}\,d(t_1,t_0)$.
\end{theorem}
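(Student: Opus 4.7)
The plan is to first verify that the weighted path metric $d$ defines a complete metric space on the tree lattice, then to obtain a per-path contraction estimate from hypotheses (a)--(c), and finally to assemble these into a global constant $q<1$ so that Banach's theorem delivers both uniqueness of $t^*$ and the quoted geometric rate.

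As a warm-up I would check that $d$ is indeed a metric: non-negativity, symmetry, and definiteness are inherited from the label-level distance $\delta$, and the triangle inequality passes under the convex combination because the weights $\alpha_p\ge 0$ sum to one. Completeness then follows by truncation --- since the path set $P$ is bounded by a fixed maximal depth, a Cauchy sequence $(t_n)$ is, coordinate-wise, Cauchy in each label space; taking path-wise limits produces a candidate tree $t^\infty$ whose distance to $t_n$ is controlled by the same weighted sum, and $d(t_n,t^\infty)\to 0$ by dominated convergence against the summable weight profile.

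The core estimate is a path-wise decomposition of $\mathcal{T}(t)$ versus $\mathcal{T}(t')$. Assumption (a), CFG masking, is non-expansive: masks only restrict the admissible label set at each parser state and cannot enlarge any existing discrepancy, so on purely masked coordinates $\delta$ does not grow. Assumption (b), bounded edits per level, implies that a disagreement at path $p$ propagates only to a bounded neighborhood $N(p)$ of paths with Lipschitz constant $L$, yielding $\delta(\ell_{\mathcal{T}(t)}(p),\ell_{\mathcal{T}(t')}(p))\le L\sum_{p'\in N(p)}\delta(\ell_t(p'),\ell_{t'}(p'))$. Assumption (c) contributes a multiplicative factor $\beta<1$ on ambiguous leaves, because the evidence pruner collapses a $\beta$-fraction of branches consistently on both sides. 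Summing against $\alpha_p$ and reindexing the neighborhood sums by Fubini gives $d(\mathcal{T}(t),\mathcal{T}(t'))\le q\, d(t,t')$ with $q=\max\{\beta,\, L C_\alpha\}$, where $C_\alpha=\sup_{p'}\alpha_{p'}^{-1}\sum_{p:\,p'\in N(p)}\alpha_p$ is a neighborhood-overlap constant fixed by the weight schedule. Choosing $\alpha_p=c\,\gamma^{|p|}$ with $\gamma\in(0,1)$ and normalizing forces $LC_\alpha<1$ whenever the edit radius is finite.

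Once the contraction is in hand, the remainder is mechanical: the iterates $t_{n+1}=\mathcal{T}(t_n)$ satisfy $d(t_{n+1},t_n)\le q^n d(t_1,t_0)$, hence form a Cauchy sequence converging to a unique fixed point $t^*$ by completeness, and the standard telescoping bound $d(t_n,t^*)\le \sum_{k\ge n}d(t_{k+1},t_k)\le\tfrac{q^n}{1-q}d(t_1,t_0)$ produces the stated rate. The main obstacle is the assembly step: bounded local edits couple nearby paths, so a naive per-coordinate estimate double-counts disagreements across $N(p)$. The cleanest remedy is the pairing of the edit-locality constant $L$ with a geometrically decaying $\alpha_p$, so that cross-path interactions telescope into a single scalar depending only on $\gamma$ and the edit radius --- precisely what keeps $q$ strictly below one independently of tree depth.
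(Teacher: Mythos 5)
Your proposal follows essentially the same route as the paper's (far terser) proof: CFG masks are non-expansive, bounded per-level edits give a local Lipschitz bound, evidence pruning contributes the factor $\beta<1$, and summing pathwise contributions against the weights $\alpha_p$ yields a global $q<1$ from which Banach's theorem gives uniqueness and the geometric rate. In fact you supply details the paper omits --- completeness of $(\mathcal{T},d)$, the neighborhood-overlap constant $C_\alpha$ controlling double-counting across coupled paths, and the explicit telescoping bound --- so the argument is, if anything, more careful than the original at the one genuinely delicate step (why the pathwise sum stays strictly below $1$).
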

\begin{proof}
CFG masks eliminate divergent branches (non\,–\,expansive). Bounded edits ensure per\,–\,level Lipschitzness. Evidence pruning shrinks unresolved subtrees by factor $\beta$. Summing pathwise contributions yields a global Lipschitz constant $q<1$ and the Banach conclusion \cite{patafixedpoint}.
\end{proof}

\section{Practical XML Prompting Recipes}
\subsection{Plan-Verify-Answer (CoVe inside XML)}
\begin{lstlisting}[language=XML,caption={Plan-Verify-Answer with explicit verification queries.}]
<prompt>
  <guidelines>Answer only after verification.</guidelines>
  <dialog>
    <turn role="assistant">
      <plan>
        <step index="1">Draft answer A.</step>
        <step index="2">Plan verification questions Q1..Qk.</step>
        <step index="3">Answer Q1..Qk independently, then revise.</step>
      </plan>
    </turn>
  </dialog>
</prompt>
\end{lstlisting}
This embeds Chain\,–\,of\,–\,Verification \cite{cove2024} inside an XML schema, ensuring each \texttt{<answer>} is conditionally generated only after \texttt{<evidence>} nodes are present and pass validators.

\subsection{Agentic tool use with syntax guarantees}
\begin{lstlisting}[language=XML,caption={Tool invocation constrained by an XML/CFG schema.}]
<toolcall>
  <function name="search_clinical_trials">
    <arg name="condition">asthma</arg>
    <arg name="phase">3</arg>
  </function>
</toolcall>
\end{lstlisting}
Grammar\,–\,constrained decoding plus FSM guidance ensures no malformed calls \cite{zhang2023tooldec}.

\subsection{Multi-layer dialogue with extra dimensions (fixed points shaping a line)}
We view each layer (plan, verify, answer, tool) as a coordinate in a product space; the interaction finds a fixed point in this space. In Hilbert/Banach terms, a composite non\,–\,expansive map over the product converges under standard conditions, aligning with our contraction analysis.


\subsection{Transparent multi-branch reasoning example}
To illustrate how transparency can be enforced in multi-branch reasoning, consider a task where the model must independently summarise two documents and then contrast them.  Each branch carries its own plan, evidence, and answer, and a final comparison step synthesises the results.  The XML snippet below encodes this interaction; evidence identifiers and confidence scores are exposed for every answer to allow downstream verification and to minimise hallucinations.  Branches may be processed in parallel and then compared:

\begin{lstlisting}[language=XML,caption={Multi-branch summarisation and comparison with explicit evidence.}]
<prompt>
  <task>Summarise two news articles independently and then contrast them.</task>
  <branch name="articleA">
    <turn role="assistant">
      <plan>
        <step index="1">Identify key claims in Article A.</step>
        <step index="2">Collect supporting evidence for each claim.</step>
        <step index="3">Draft a concise summary of Article A.</step>
      </plan>
      <evidence ref="a1" conf="0.95"/>
      <answer>Summary of Article A with citations embedded in XML.</answer>
    </turn>
  </branch>
  <branch name="articleB">
    <turn role="assistant">
      <plan>
        <step index="1">Identify key claims in Article B.</step>
        <step index="2">Collect supporting evidence for each claim.</step>
        <step index="3">Draft a concise summary of Article B.</step>
      </plan>
      <evidence ref="b1" conf="0.92"/>
      <answer>Summary of Article B with citations embedded in XML.</answer>
    </turn>
  </branch>
  <compare>
    <plan>
      <step index="1">Contrast the summaries to highlight agreements and disagreements.</step>
      <step index="2">Reference the evidence from each branch when noting differences.</step>
    </plan>
    <evidence ref="c1" conf="0.90"/>
    <answer>Comparison of Article A and Article B highlighting contrasting claims.</answer>
  </compare>
</prompt>
\end{lstlisting}
This example demonstrates a transparent workflow: each branch is self-contained and exposes its evidence and summary, while the comparison step explicitly references those subtrees.  Grammar-constrained decoding ensures the generated XML remains well-formed and allows validators to independently check each branch.  Similar multi-branch patterns can be used for tasks such as legal reasoning, multi-perspective summarisation, or model self-consistency checks.

\subsection{Simulating agent output within XML}
LLM agents increasingly call external tools and APIs to gather up-to-date information.  To maintain transparency, it is important to log both the API call and the returned data.  The following example shows a simple weather lookup within an XML interaction.  The assistant's plan calls an external function, records the raw agent output in a \texttt{<agent\_output>} tag, and then produces an answer based on the returned data:

\begin{lstlisting}[language=XML,caption={Agent tool call with explicit agent output used in the answer.}]
<prompt>
  <task>Check the weather and plan a picnic.</task>
  <dialog>
    <turn role="assistant">
      <plan>
        <step index="1">Invoke the weather lookup API.</step>
        <step index="2">Interpret the forecast information.</step>
        <step index="3">Provide a picnic recommendation based on the forecast.</step>
      </plan>
      <toolcall>
        <function name="weather.lookup">
          <arg name="location">Berlin</arg>
          <arg name="date">2025-09-10</arg>
        </function>
      </toolcall>
      <agent_output source="weather.lookup">
        <temperature unit="C">22</temperature>
        <condition>Sunny</condition>
      </agent_output>
      <answer>The forecast for Berlin on 10 Sep 2025 is 22 degrees C and sunny, so a picnic in Tiergarten at noon is ideal.</answer>
    </turn>
  </dialog>
</prompt>
\end{lstlisting}
Here the \texttt{<toolcall>} element shows the exact API invocation, while the \texttt{<agent\_output>} element holds the result returned by the external tool.  Exposing the intermediate agent output helps users and verifiers understand how the model derived its final answer and allows for independent checks on the correctness of the tool call.  This pattern reflects practices advocated by constrained-decoding approaches to tool use \cite{zhang2023tooldec} and can reduce hallucination by anchoring answers to verifiable external data.

\subsection{Cross-branch communication via channels}
In many collaborative tasks, different reasoning branches (or “dimensions”) must exchange information before arriving at a joint conclusion.  To model such interactions, we introduce a \texttt{<channel>} element that represents a shared communication bus.  Messages posted on the bus carry explicit \texttt{from} and \texttt{to} attributes identifying the sending and receiving branches, along with an optional identifier and free‑text content.  Each branch listens on the channel to receive the other’s messages and updates its plan and answer accordingly.  A final \texttt{<join>} step synthesises insights across branches.

\begin{lstlisting}[language=XML,caption={Cross-branch communication using a shared channel.}]
<prompt>
  <task>Solve two subproblems independently, exchange summaries, and integrate the results.</task>
  <!-- Shared message bus for communication between branches -->
  <channel name="bus">
    <message id="m1" from="alpha" to="beta">Summary of subproblem 1: ...</message>
    <message id="m2" from="beta" to="alpha">Summary of subproblem 2: ...</message>
  </channel>
  <branch name="alpha">
    <turn role="assistant">
      <plan>
        <step index="1">Solve subproblem 1 and summarise it.</step>
        <step index="2">Publish the summary to the bus for beta to read.</step>
        <step index="3">Read beta's summary from the bus and update our answer.</step>
      </plan>
      <evidence ref="alpha1" conf="0.93"/>
      <answer>The combined answer for subproblem 1, updated after receiving beta's message.</answer>
    </turn>
  </branch>
  <branch name="beta">
    <turn role="assistant">
      <plan>
        <step index="1">Solve subproblem 2 and summarise it.</step>
        <step index="2">Publish the summary to the bus for alpha to read.</step>
        <step index="3">Read alpha's summary from the bus and update our answer.</step>
      </plan>
      <evidence ref="beta1" conf="0.91"/>
      <answer>The combined answer for subproblem 2, updated after receiving alpha's message.</answer>
    </turn>
  </branch>
  <join>
    <plan>
      <step index="1">Combine the updated answers from alpha and beta.</step>
    </plan>
    <evidence ref="join1" conf="0.89"/>
    <answer>Final integrated answer after cross-branch communication.</answer>
  </join>
</prompt>
\end{lstlisting}
This example illustrates how distinct branches can communicate through a well-defined channel to exchange intermediate results.  Each branch posts its summary to the shared bus, then reads the other’s summary and refines its own answer.  The final \texttt{<join>} node integrates contributions from both branches.  Because all messages, plans, evidences, and answers are explicit and tied to the channel, external verifiers can audit the information flow and ensure that the final answer reflects contributions from every branch.  Such cross‑dimension communication patterns support transparency and correctness in collaborative reasoning tasks.

\subsection{Diagnosing render artifacts in listings}
Occasionally, rendering issues in \texttt{lstlisting} environments may introduce stray glyphs at the start of lines, especially when Unicode characters or non-breaking spaces are misinterpreted.  During development, we observed extraneous “u” characters appearing at the beginning of two lines in Listing~5.  After investigation, we traced the artefacts to a combination of malformed Unicode dashes and quoting characters, as well as the default column handling of \texttt{listings}.  Enabling \texttt{keepspaces} and \texttt{columns=fullflexible} and normalising dash/quote characters eliminated the issue.  To document the debugging process, the following XML snippet logs the problem and its resolution.  Each entry records the line affected, a description of the anomaly, and the remediation parameters applied:

\begin{lstlisting}[language=XML,caption={Render log documenting stray glyphs and remediation.}]
<render_log>
  <line number="93" column="1">Observed stray 'u' glyph at line start; source was a non-breaking space before &lt;/plan&gt;.</line>
  <line number="94" column="1">Second stray 'u' caused by en dash in the preceding comment.</line>
  <fix>
    <param name="keepspaces">true</param>
    <param name="columns">fullflexible</param>
    <param name="upquote">true</param>
    <param name="literate">dash/quote replacements</param>
  </fix>
</render_log>
\end{lstlisting}
Logging such artefacts not only aids reproducibility but also helps others adopt robust settings for code and data samples.  By treating visual glitches as structured data, we can systematically track and eliminate them when rendering XML or programming languages in scholarly documents.

\section{XML vs JSON vs YAML for Structured LLM Outputs}
Recent empirical work reports JSON typically yields higher parseability than XML/YAML in clinical extraction, yet targeted prompting narrows the gap and XML remains attractive where schemas and mixed content matter \cite{neveditsin2025robustness,tam2024format}. Our framework applies uniformly: one swaps the grammar $G$ and the token mask generator.

\section{Discussion: Ethics, Transparency, and Fixed Points}
Our fixed\,–\,point perspective dovetails with logic\,–\,first limits on radical transparency \cite{alpay2025fixed}. The least fixed point that minimizes an ethical risk functional (per lattice\,–\,theoretic arguments) reconciles accountability with robustness, and suggests partial transparency via Kripkean stratification.

\section{Conclusion}
We presented a theory and practice of XML prompting grounded in fixed\,–\,point semantics and grammar\,–\,aligned decoding, with full proofs and deployable recipes. Future work: tighter bounds for speculative constrained decoding \cite{nakshatri2025cdsl}, and integrating $\mu$\,–\,calculus model checking as an online invariant filter.

\section*{Acknowledgments}
We thank the communities behind ACL, NeurIPS, ICLR, Nature, PNAS, OUP/CUP, JMLR/PMLR, and colleagues at Anthropic, OpenAI, Google, Microsoft, ETH Zurich, KIT, Helsinki, Tsinghua, and Bulgarian Academy of Sciences for foundational work.

\end{document}